\documentclass[acmsmall,nonacm]{acmart}

\setcopyright{none}
\acmYear{2027}
\acmDOI{}
\acmISBN{}
\acmConference[ICDCN 2027]{28th International Conference on Distributed
Computing and Networking}{January 4--7, 2027}{NITK Surathkal, India}

\usepackage{amsmath}
\usepackage{array}
\usepackage{tikz}

\theoremstyle{acmplain}
\newtheorem{observation}{Observation}

\begin{document}

\title[Stateless Recovery and Byzantine Quorum Safety]{When Can Stateless Recovery Defeat Byzantine Quorum Safety? A Tight Normal Form for Single-Step BFT}

\author{Arnab Mallick}
\authornote{Corresponding Author.}
\email{arnabm@cdac.in}
\orcid{0009-0008-5458-6429}
\affiliation{%
  \institution{Information Security Group}
  \institution{Centre for Development of Advanced Computing}
  \city{Hyderabad}
  \state{Telangana}
  \country{India}
}

\author{Indraveni Chebolu}
\email{indravenik@cdac.in}
\orcid{0009-0009-6128-6253}
\affiliation{%
  \institution{Information Security Group}
  \institution{Centre for Development of Advanced Computing}
  \city{Hyderabad}
  \state{Telangana}
  \country{India}
}

\begin{abstract}
Byzantine quorum safety relies on correct replicas refusing to sign
conflicting values. A replica that loses its protocol state during recovery
but retains its identity and signing key may forget an earlier vote. We study
certificates formed by matching signed votes from at least $q$ of $n$
replicas, assuming that each correct replica avoids conflicting votes between
recoveries. If two conflicting certificates form, their overlap has size
between $2q-n$ and $b+c$, where $b$ counts Byzantine replicas and $c$ counts
correct identities that recovered during the execution considered. Our main
result decomposes the slack $b+c-(2q-n)$ into four nonnegative counts:
extra signers in the first certificate, extra signers in the second,
identities in neither certificate, and Byzantine or recovering identities
outside their overlap. Zero slack forces an exact signer partition. With
$n=3f+1$ replicas, threshold $q=2f+1$, at most $f$ Byzantine replicas, and
exactly one correct recovery event, any conflicting pair forces exactly $f$
Byzantine replicas, all in the overlap together with the recovered replica;
each certificate has a disjoint side of $f$ correct replicas. A minimal
protocol attains this form.
We distinguish certificate formation from acceptance, give a sufficient check
using configured fault and recovery caps, and explain why durable vote records
written before signature release prevent the conflict.
\end{abstract}

\keywords{Byzantine fault tolerance, stateless recovery, quorum certificates,
equivocation, safety}

\begin{CCSXML}
<ccs2012>
   <concept>
       <concept_id>10003752.10003809.10010172</concept_id>
       <concept_desc>Theory of computation~Distributed algorithms</concept_desc>
       <concept_significance>500</concept_significance>
       </concept>
   <concept>
       <concept_id>10010520.10010575</concept_id>
       <concept_desc>Computer systems organization~Dependable and fault-tolerant systems and networks</concept_desc>
       <concept_significance>500</concept_significance>
       </concept>
 </ccs2012>
\end{CCSXML}

\ccsdesc[500]{Theory of computation~Distributed algorithms}
\ccsdesc[500]{Computer systems organization~Dependable and fault-tolerant systems and networks}

\maketitle

\section{Introduction}

Many Byzantine fault-tolerant (BFT) protocols use signed votes to certify a
decision. In a system of $n=3f+1$ replicas tolerating at most $f$ Byzantine
replicas, two certificates with $2f+1$ signers share at least $f+1$
identities. At least one shared signer is correct. When correct replicas
refuse to sign conflicting values, this overlap prevents conflicting
certificates.

Recovery can invalidate the refusal rule. A replica may lose its vote record
yet retain the identity and key needed to vote again.
We call this \emph{state-erasing recovery}. Even if the replica follows its
rules before and after recovery, its signatures may support both sides of a
conflict. A second signature alone, however, need not supply enough votes
for a second certificate.

We ask: \emph{what signer-set structure is forced when state-erasing
recovery permits conflicting certificates?} We study \emph{single-step}
certificates: any matching valid votes from at least $q$ distinct replicas
constitute a certificate, where $q$ is the certificate threshold. A correct
replica must avoid conflicting votes within each interval between recoveries;
we call this property \emph{incarnation-cleanness}.

The intersection lower bound $2q-n$ is classical. Our contribution is an
exact accounting of the slack between that bound and the actual number of
Byzantine or recovering identities. For any conflicting pair, each unit of
slack is accounted for by extra certificate signers, identities in neither
certificate, or Byzantine or recovering identities outside the intersection.
Zero slack forces minimum-size certificates, coverage of all replicas, and
an intersection containing every Byzantine and recovering identity. With
$n=3f+1$, $q=2f+1$, and one correct recovery event, this forces the partition
in Figure~\ref{fig:quorums}.

\paragraph{Position and scope.}
Classical quorum systems supply the intersection argument~\cite{malkhi1998};
PBFT's recovery treatment preserves the quorum state on which it
depends~\cite{castro2002}. Trusted mechanisms such as A2M, TrInc,
and Aegis constrain equivocation or rollback~\cite{a2m2007,trinc2009,pallas2026}.
We characterize the signer sets of a pair that has formed. In multi-phase
protocols, phase, view, locking, and justification rules impose additional
conditions, so applying our accounting to those protocols requires
protocol-specific analysis. Section~\ref{sec:related} gives the detailed
comparison.

To show that the forced partition is attainable, we use \emph{FirstVote-QC},
a minimal protocol whose only voting restriction is a volatile vote record. Section~\ref{sec:example} collects its rules, it serves as a
tightness construction.

\paragraph{Contributions.}
\begin{itemize}
\item \textbf{C1 (exact accounting).} We prove the slack identity and its
tight and near-tight signer-set consequences.
\item \textbf{C2 (tightness and acceptance).} FirstVote-QC realizes the
zero-slack form. We state separately when a formed pair is accepted and
causes conflicting decisions.
\item \textbf{C3 (checks and boundaries).} We derive a sufficient check from
configured caps, give countermodels separating state erasure, conflicting
votes, certificate formation, and acceptance, and explain how durable vote
records prevent conflicting reuse.
\end{itemize}

\section{Model}
\label{sec:model}

Let $\Pi$ be a fixed set of $n$ replicas with distinct identities. A replica
may act as a \emph{validator}, accepting certificates and deciding values.
A fixed set $B\subseteq\Pi$ is Byzantine and may deviate arbitrarily;
$b=|B|\le f$, where $f$ is the configured fault budget. Each replica $p$
holds a signing key $\mathsf{sk}_p$, and correct replicas' signatures cannot
be forged. Our standard parameters are $n=3f+1$ and $q=2f+1$, with integer
$f\ge1$.

We study one decision instance $d$, such as a block height or log slot.
The notation $x\perp_d y$ means that $x$ and $y$ are incompatible values:
both cannot be decided for this instance. The application supplies this
symmetric, irreflexive conflict relation.

A \emph{vote} $\mathrm{vote}_p(d,x)$ is a signed message supporting $x$ for
$d$. It is \emph{valid} if correctly signed for $(d,x)$; it is
\emph{locally valid} if the signer's protocol rules permit it in its current
state. Correct replicas emit only locally valid votes. Votes conflict when
their values satisfy $x\perp_d y$.

Fix a certificate threshold $1\le q\le n$. A \emph{threshold certificate}
$C_x$ contains valid votes for $(d,x)$ from at least $q$ distinct identities;
its \emph{signer set} is $Q_x:=\mathsf{Signers}(C_x)$. The certificates are
\emph{single-step}: any such set of matching votes is a certificate.
Certificates therefore expose their signer sets; aggregate threshold
signatures that hide signers are excluded. For multi-phase protocols,
additional phase and view checks must be established separately
(Section~\ref{sec:related}).

An \emph{execution prefix} is a finite initial segment of an execution.
It is \emph{admissible} if it respects the fault budget, authentication,
correct replicas' rules, and the recovery semantics below. A prefix
\emph{witnesses} a certificate pair when it contains both certificates and
all votes used to form them. Our results require no synchrony assumption;
the constructions use message delays within finite prefixes.

\begin{definition}[Certificate safety]
No two conflicting threshold certificates $C_x,C_y$ are accepted by correct
validators.
\end{definition}

\begin{observation}
\label{obs:link}
If conflicting $C_x,C_y$ are accepted by distinct correct validators, each
undecided immediately beforehand, and each acceptance causes the validator
to decide its certified value, then consensus safety is violated.
\end{observation}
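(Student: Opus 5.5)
The plan is to unfold the definitions and chain them directly: acceptance leads to a decision, conflict forbids deciding both values, and the two deciding validators are correct. Let $v_x$ and $v_y$ be the distinct correct validators accepting $C_x$ and $C_y$. By hypothesis, each is undecided just before its acceptance and decides the certified value as a result of it. So the execution contains a correct validator deciding $x$ for instance $d$ and a correct validator deciding $y$ for the same instance $d$.

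The key step is to use the conflict relation. Since $C_x$ and $C_y$ conflict, $x\perp_d y$. By the meaning of $\perp_d$, $x$ and $y$ cannot both be decided for $d$. Irreflexivity gives $x\neq y$, so two correct replicas decide different, mutually incompatible values for the same instance. This is exactly a violation of consensus agreement, which is the safety property at issue.

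The only delicate point is the role of the ``undecided immediately beforehand'' hypothesis. Without it, a validator might already have decided some value $z$ before accepting the certificate. Acceptance would then not count as a fresh decision, and the two decisions might not be those of $x$ and $y$. The hypothesis rules this out, so each validator's decision at instance $d$ is exactly its certified value. I therefore do not expect a real obstacle: the argument is a short definitional chain. The assumption that $v_x\neq v_y$ is not actually needed for the violation, since a single validator deciding both values would violate safety too. It matters only because the undecided-beforehand condition cannot hold for both acceptances at a single validator.
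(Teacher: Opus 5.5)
Your definitional chain is correct and matches the immediate reasoning behind Observation~\ref{obs:link}, which the paper states without a separate proof: two correct validators decide $x$ and $y$ with $x\perp_d y$, which the conflict relation forbids. One correction to your closing aside: in this model the undecided-beforehand condition \emph{can} hold for both acceptances at a single validator, because a state-erasing recovery returns its state to $s_0$ (in FirstVote-QC, rule~4 resets $\mathit{decided}(d)$ to $\bot$); distinctness is therefore an independent hypothesis, and it is what makes the conclusion a plain agreement violation between two correct validators, rather than one that depends on whether re-deciding after amnesia counts as unsafe.
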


\paragraph{Stateless recovery.}
A \emph{state-erasing crash} of a correct replica $p$ maps its state as
$(s_p,\mathsf{sk}_p,p)\to(s_0,\mathsf{sk}_p,p)$: protocol state is lost,
identity and key remain, and $s_0$ is the initial protocol state.
We call recovery after this event \emph{state-erasing (stateless) recovery}.
A recovered replica may resume voting, rebuild state first, or remain
retired. For a prefix $E$ witnessing a pair, let
$R_E\subseteq\Pi\setminus B$ be the set of correct identities that undergo
at least one state-erasing recovery in $E$, and let $c=|R_E|$. We suppress
the subscript when the prefix is clear. Repeated recoveries of one identity
count once in $c$; an identity is counted even if it never votes twice.
Recovery is modeled only for otherwise-correct replicas. We use $\uplus$
for disjoint union; in particular, $B\cap R_E=\varnothing$.

\begin{definition}[Incarnation-cleanness]
\label{def:iclean}
An \emph{incarnation} is a correct replica's interval from initialization
or recovery until its next state-erasing crash, or the rest of its execution
if no such crash occurs. A protocol is \emph{incarnation-clean} for instance
$d$ if no correct replica emits locally valid votes for two incompatible
values $x\perp_d y$ during one incarnation.
\end{definition}

Thus conflicting votes from a correct identity require a recovery between
them. Incarnation-cleanness is an explicit voting restriction; consensus
safety alone does not imply it, since additional protocol rules may prevent
conflicting votes from forming certificates or causing decisions.

\section{Why State Erasure Is Not Enough}
\label{sec:example}

Our tightness construction uses a minimal protocol to isolate the effect
of losing a vote record. We analyze safety only.

\paragraph{FirstVote-QC}
Fix a leader for $d$ and set $q=2f+1$. Correct replicas follow these rules:
\begin{enumerate}
\item \textbf{State.} Keep a volatile vote record $\mathit{voted}(d)$ and
a volatile decision record $\mathit{decided}(d)$, both initially $\bot$.
\item \textbf{Vote.} On a leader-signed $\mathrm{PROPOSE}(d,v)$, if
$\mathit{voted}(d)=\bot$, set $\mathit{voted}(d):=v$ before releasing
$\mathrm{vote}_p(d,v)$ to the leader. Ignore later proposals in that
incarnation.
\item \textbf{Certify and decide.} Any $q$ matching votes from distinct
identities form a certificate. An undecided replica accepts the first
certificate it receives after checking its signatures and signer count,
and sets $\mathit{decided}(d)$ to its value, irrespective of its own vote.
\item \textbf{Recover.} On state-erasing recovery, reset both records to
$\bot$. The replica can then vote again with the same identity and key.
\end{enumerate}
The vote record makes FirstVote-QC incarnation-clean.

Without recovery it is safe: two $(2f{+}1)$-certificates intersect in at least
$f+1$ identities, so at least one shared identity is correct. Conflicting
certificates would require that replica to vote for both values, which its
vote record forbids. With one stateless recovery of a correct $r$, the vote
record resets and $r$ can vote again. Section~\ref{sec:theorem} shows that
whenever a one-recovery conflicting pair occurs, its signer sets are \emph{forced} into
Figure~\ref{fig:quorums}.

\begin{figure}[t]
\centering
\begin{tikzpicture}
  \draw[thick] (-1.15,0) ellipse (2.05 and 1.3);
  \draw[thick] (1.15,0)  ellipse (2.05 and 1.3);
  \node at (-2.55,1.55) {$Q_x$};
  \node at (2.55,1.55)  {$Q_y$};
  \node[align=center] at (-2.35,0) {$H_0$\\[-1pt]{\scriptsize $f$ correct}};
  \node[align=center] at (0,0)     {$B\uplus\{r\}$\\[-1pt]{\scriptsize $f$ Byz.\ $+\ r$}};
  \node[align=center] at (2.35,0)  {$H_1$\\[-1pt]{\scriptsize $f$ correct}};
  \node[align=center] at (0,-1.85)
      {\scriptsize $|Q_x|=|Q_y|=2f{+}1,\qquad |Q_x\cap Q_y|=f{+}1$};
\end{tikzpicture}
\Description{Two overlapping ellipses labelled Q_x and Q_y. The
left-only region is H_0 (f correct replicas), the right-only region is H_1
(f correct replicas), and the overlap is the disjoint union of B and {r}: the f
Byzantine replicas together with the single recovered replica r.}
\caption{Forced signer partition with one correct recovery event
(Corollary~\ref{cor:nf}). The overlap contains the entire Byzantine set $B$
and the recovered correct replica $r$. The exclusive sets $H_0,H_1$ each
contain $f$ correct replicas that do not recover.}
\label{fig:quorums}
\end{figure}

\begin{figure}[t]
\centering
\begin{tikzpicture}[x=0.95cm,y=0.62cm,>=stealth]
  \draw[->] (0,2.45) -- (7.0,2.45) node[right] {\scriptsize time};

  \node[left] at (0,1.65) {\scriptsize $r$};
  \draw[gray] (0,1.65) -- (6.9,1.65);
  \node[draw,fill=white,inner sep=1.5pt] at (0.9,1.65)
      {\scriptsize $\mathrm{vote}_r(x)$};
  \node[draw,fill=white,inner sep=1.5pt,align=center] at (3.0,1.65)
      {\scriptsize erase state\\[-1pt]\scriptsize recover};
  \node[draw,fill=white,inner sep=1.5pt] at (4.8,1.65)
      {\scriptsize $\mathrm{vote}_r(y)$};

  \node[left] at (0,0.78) {\scriptsize $C_x$};
  \node[anchor=east] at (1.45,0.78) {\scriptsize forms};
  \draw[dashed,->] (1.55,0.78) -- (5.75,0.78);
  \node[anchor=west] at (5.78,0.78)
      {\scriptsize $u_x$ decides $x$};

  \node[left] at (0,0.12) {\scriptsize $C_y$};
  \node[anchor=east] at (4.95,0.12) {\scriptsize forms};
  \draw[dashed,->] (5.05,0.12) -- (6.35,0.12);
  \node[anchor=west] at (6.38,0.12)
      {\scriptsize $u_y$ decides $y$};
\end{tikzpicture}
\Description{A time line with replica r voting for x, undergoing the sole
state-erasing recovery, and then voting for y. Certificate C_x forms before
the recovery and is delayed before delivery to u_x, certificate C_y forms
after recovery and is delivered separately to u_y. The two validators decide
x and y, respectively.}
\caption{An attainable fork. Replica $r$ votes for $x$, recovers, then
votes for $y$. Separate delayed deliveries let distinct undecided correct
validators $u_x,u_y$ decide the conflicting values.}
\label{fig:timeline}
\end{figure}

\paragraph{Worked example ($n=7,f=2$).}
Here $q=5$, $b=2$, $c=1$, and $2q-n=b+c=3$. Let
$B=\{b_1,b_2\}$, $H_0=\{h_1,h_2\}$, and
$H_1=\{h_3,h_4\}$, with $r$ correct. Before recovery, $r$ votes for $x$,
so the five signers $Q_x=B\cup H_0\cup\{r\}$ form $C_x$. After $r$ erases
its state and recovers, it votes for $y$, and
$Q_y=B\cup H_1\cup\{r\}$ forms $C_y$. Thus
$Q_x\cap Q_y=B\uplus\{r\}$ has size $3=f+1$, while
$Q_x\cup Q_y=\Pi$. Delaying $C_x,C_y$ and delivering them separately to
undecided $u_x\in H_0,u_y\in H_1$ yields conflicting decisions.

\section{From Formation to Acceptance}
\label{sec:premises}

The counting results concern formation of a conflicting certificate pair.
Formation means that both signed vote sets exist. Acceptance is a separate
protocol event: a validator may impose checks beyond the certificate's
signatures and signer count.

\begin{definition}[Separable acceptance of a pair]
\label{def:sa}
Acceptance is \emph{local} when a validator uses only a certificate and its
own volatile state, with no external reuse oracle. A pair $C_x,C_y$ is
\emph{separably accepted} in prefix $E$ if two distinct, still-undecided
correct validators $u_x,u_y$ accept $C_x,C_y$, respectively, within $E$,
each without receiving the other certificate or its votes.
\end{definition}

In FirstVote-QC, signature and count checks permit the separate deliveries in
Figure~\ref{fig:timeline}.

\section{Bound, Slack Decomposition, and Normal Form}
\label{sec:theorem}

\paragraph{Takeaway.}
Two conflicting certificates share at least $2q-n$ identities. Every shared
identity is either Byzantine or correct and recovered. The slack identity
then accounts exactly for the difference between this lower bound and
the total number of Byzantine or recovering identities in the prefix.

\begin{theorem}[Intersection bound]
\label{thm:bound}
For a single-step protocol that is incarnation-clean at $n$ replicas with
threshold $q$, let an admissible prefix $E$ with actual Byzantine set $B$, $|B|=b$,
and recovering-correct set $R_E$, $|R_E|=c$, contain conflicting threshold
certificates $C_x,C_y$ whose votes all occur in $E$, with signer sets
$Q_x,Q_y$. Then
\[
2q-n\ \le\ |Q_x\cap Q_y|\ \le\ b+c .
\]
Consequently, a conflicting certificate pair requires $b+c\ge 2q-n$.
\end{theorem}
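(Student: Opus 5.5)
The plan is to prove the two inequalities separately and then read off the consequence.

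For the lower bound I will use inclusion–exclusion on the signer sets. Both $Q_x$ and $Q_y$ are subsets of $\Pi$, so $|Q_x\cup Q_y|\le n$. Each is a threshold certificate, so $|Q_x|,|Q_y|\ge q$. Therefore
\[
|Q_x\cap Q_y| = |Q_x|+|Q_y|-|Q_x\cup Q_y| \ \ge\ 2q-n .
\]
This step only needs the single-step definition, under which signer sets consist of distinct identities with valid votes.

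For the upper bound I will show that $Q_x\cap Q_y\subseteq B\uplus R_E$. Since $B\cap R_E=\varnothing$, that containment gives $|Q_x\cap Q_y|\le b+c$. Take $p\in Q_x\cap Q_y$ with $p\notin B$; the goal is $p\in R_E$.
\begin{enumerate}
\item Because $p$ is correct, its signature cannot be forged. So $\mathrm{vote}_p(d,x)$ and $\mathrm{vote}_p(d,y)$ were actually emitted by $p$, and both emissions occur in $E$ by hypothesis.
\item Correct replicas emit only locally valid votes, so both are locally valid votes of $p$ for values with $x\perp_d y$.
\item By incarnation-cleanness (Definition~\ref{def:iclean}), the two votes cannot fall in the same incarnation. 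Hence a state-erasing crash of $p$ lies strictly between them.
\item Both votes, and therefore that crash and the ensuing recovery, lie in the finite prefix $E$. So $p$ undergoes a state-erasing recovery in $E$, which means $p\in R_E$.
\end{enumerate}

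The main obstacle is step 4: making sure the separating recovery really lies inside $E$, and is not an event before or after the prefix that $R_E$ does not count. I would settle it from the definitions. An incarnation runs from initialization or recovery until the next crash. If no crash occurred between the two emission times, both emissions would be in the same incarnation. Since both emissions are in $E$ and $E$ is an initial segment, any event between them is also in $E$. There is a minor subtlety in passing from the crash to the recovery: I would treat the crash and the recovery from it as a single event, following the state map in the model, so "undergoes a state-erasing recovery" is witnessed. I would also note that Byzantine members of the overlap need no argument, since they are counted in $b$.

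The consequence then follows by chaining the two inequalities: $2q-n\le b+c$.
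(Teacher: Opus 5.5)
Your proposal is correct and follows the paper's proof: inclusion--exclusion for the lower bound, and the containment $Q_x\cap Q_y\subseteq B\uplus R_E$ via incarnation-cleanness for the upper bound. Your extra care in steps 1 and 4 makes explicit what the paper's one-line argument leaves implicit. Step 1 uses unforgeability, so both votes were really emitted by $p$. Step 4 notes that the separating recovery precedes the second vote and therefore lies in the initial segment $E$.
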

\begin{proof}
Let $I=Q_x\cap Q_y$. Since $|Q_x|,|Q_y|\ge q$ and
$|Q_x\cup Q_y|\le n$,
\[
|I|=|Q_x|+|Q_y|-|Q_x\cup Q_y|\ge 2q-n.
\]
Every identity in $I$ cast valid votes for both conflicting values. A correct
such identity did so in different incarnations by incarnation-cleanness and
therefore belongs to $R_E$; every other identity in $I$ belongs to $B$.
Thus $I\subseteq B\uplus R_E$ and $|I|\le b+c$.
\end{proof}

\begin{corollary}[Zero-recovery certificate safety]
\label{cor:zero}
If $c=0$, a conflicting pair needs $b\ge2q-n$. At
$n=3f+1,q=2f+1$, this requires $b\ge f+1$ and is impossible under $b\le f$.
\end{corollary}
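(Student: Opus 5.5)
The corollary is a direct specialization of Theorem~\ref{thm:bound}, so I will obtain it by substitution. First I set $c=0$, meaning $R_E=\varnothing$. Suppose an admissible prefix $E$ witnesses a conflicting pair $C_x,C_y$ in an incarnation-clean single-step protocol. Chaining the two inequalities of Theorem~\ref{thm:bound} gives $2q-n\le|Q_x\cap Q_y|\le b+c=b$, which is the first claim, $b\ge 2q-n$. It helps to say why this is meaningful. With no recovery, every correct identity lives in a single incarnation. Incarnation-cleanness therefore prevents any correct identity from appearing in $Q_x\cap Q_y$, so the intersection lies entirely inside $B$.

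Next I instantiate the standard parameters. Substituting $n=3f+1$ and $q=2f+1$ gives $2q-n=4f+2-3f-1=f+1$, so a conflicting pair would need $b\ge f+1$. The model fixes $b=|B|\le f$, which contradicts this. Hence no admissible prefix with $c=0$ contains conflicting threshold certificates, and in particular none can be accepted.

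I expect no real obstacle here, only bookkeeping. The one point I would check is that the theorem's hypotheses hold: the votes of both certificates must lie in $E$, which is exactly what "witnesses" provides. I would also note that $c=0$ is measured on the same prefix used in the theorem.
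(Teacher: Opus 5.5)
Your proof is correct and matches the paper: the paper gives no separate proof because the corollary is an immediate specialization of Theorem~\ref{thm:bound}. The argument is exactly yours, setting $c=0$ to get $2q-n\le|Q_x\cap Q_y|\le b$ and then observing that $2q-n=f+1$ contradicts $b\le f$.
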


\begin{theorem}[Slack decomposition]
\label{thm:slack}
Under Theorem~\ref{thm:bound}'s conditions, put $I=Q_x\cap Q_y$ and define
\begin{align*}
a_x&:=|Q_x|-q, & a_y&:=|Q_y|-q,\\
u&:=n-|Q_x\cup Q_y|,
& e&:=|(B\uplus R_E)\setminus I|.
\end{align*}
All four quantities are nonnegative integers, and
\begin{equation}
\label{eq:slack}
\underbrace{b+c-(2q-n)}_{\Delta}=a_x+a_y+u+e.
\end{equation}
Hence, for $k\ge0$, if $\Delta\le k$, each certificate has at most $q+k$ signers, their
union contains at least $n-k$ identities, and at most $k$ identities of
$B\uplus R_E$ lie outside their intersection.
\end{theorem}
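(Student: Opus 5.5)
The plan is to derive \eqref{eq:slack} by writing the two sides of Theorem~\ref{thm:bound}'s chain as exact identities rather than inequalities, and then subtracting. The starting point is the inclusion $I\subseteq B\uplus R_E$ established in the proof of Theorem~\ref{thm:bound}. Every identity in $I$ signed both conflicting values, so it is either Byzantine or, by incarnation-cleanness, a correct identity that recovered within $E$.

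\emph{Step 1 (nonnegativity and integrality).} Each of the four quantities is an integer because it is a difference of cardinalities of finite sets. Nonnegativity checks one by one:
\begin{itemize}
\item $a_x,a_y\ge0$ because a threshold certificate has at least $q$ signers;
\item $u\ge0$ because $Q_x\cup Q_y\subseteq\Pi$ and $|\Pi|=n$;
\item $e\ge0$ because $e$ is the size of a set.
\end{itemize}

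\emph{Step 2 (upper side as an identity).} Since $B\cap R_E=\varnothing$, we have $|B\uplus R_E|=b+c$. The inclusion $I\subseteq B\uplus R_E$ splits $B\uplus R_E$ disjointly into $I$ and $(B\uplus R_E)\setminus I$. This gives the exact identity $b+c=|I|+e$.

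\emph{Step 3 (lower side as an identity).} By inclusion--exclusion,
\[
|I|=|Q_x|+|Q_y|-|Q_x\cup Q_y|=(q+a_x)+(q+a_y)-(n-u)=2q-n+a_x+a_y+u.
\]

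\emph{Step 4 (combine).} Substituting Step~3 into Step~2 gives $b+c=2q-n+a_x+a_y+u+e$. Rearranging yields \eqref{eq:slack}.

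\emph{Step 5 (consequences).} Suppose $\Delta\le k$. Each of the four summands is nonnegative by Step~1, so each is at most $\Delta$ and hence at most $k$. Reading off the individual bounds:
\begin{itemize}
\item $a_x\le k$ gives $|Q_x|\le q+k$, and likewise $a_y\le k$ gives $|Q_y|\le q+k$;
\item $u\le k$ gives $|Q_x\cup Q_y|\ge n-k$;
\item $e\le k$ says at most $k$ identities of $B\uplus R_E$ lie outside $I$.
\end{itemize}

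The only step needing care is Step~2. It relies on the containment $I\subseteq B\uplus R_E$, which was already established in Theorem~\ref{thm:bound}'s proof, together with the disjointness $B\cap R_E=\varnothing$ from the model. Once those are cited, the identity is exact bookkeeping, so I expect no real obstacle.
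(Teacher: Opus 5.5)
Your proposal is correct and follows the paper's proof step for step: the containment $I\subseteq B\uplus R_E$ together with $B\cap R_E=\varnothing$ gives $b+c=|I|+e$, inclusion--exclusion gives $|I|=2q-n+a_x+a_y+u$, and nonnegativity of the summands yields the three consequences. Your explicit nonnegativity checks for $a_x$, $a_y$, and $u$ are details the paper leaves implicit.
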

Here $a_x$ and $a_y$ are the \emph{certificate oversizes}: each counts
signers beyond the required threshold $q$. The term $u$ counts replicas
in neither certificate. The term $e$ counts Byzantine or recovering
identities that do not sign \emph{both} certificates, including those that
sign only one. Equation~\eqref{eq:slack} distributes the slack exactly among these four counts.
\begin{proof}
Theorem~\ref{thm:bound}'s proof gives $I\subseteq B\uplus R_E$. Because
$B\cap R_E=\varnothing$, $|B\uplus R_E|=b+c$; hence $e$ is nonnegative and
$b+c=|I|+e$. Inclusion--exclusion gives
\[
|I|=(q+a_x)+(q+a_y)-(n-u)=2q-n+a_x+a_y+u.
\]
Substitution yields~\eqref{eq:slack}. If the sum is at most $k$, each
nonnegative summand is at most $k$, which gives the three consequences.
\end{proof}

\begin{theorem}[Tight-case normal form]
\label{thm:tight}
Under Theorem~\ref{thm:bound}'s conditions, if $b+c=2q-n$, then, with
$R=R_E$ and $I=Q_x\cap Q_y$,
\[
I=B\uplus R,\qquad |Q_x|=|Q_y|=q,\qquad Q_x\cup Q_y=\Pi.
\]
Moreover, $H_0:=Q_x\setminus I$ and $H_1:=Q_y\setminus I$ are disjoint
correct nonrecovering $(n-q)$-sets, and
$\Pi=(B\uplus R)\uplus H_0\uplus H_1$.
\end{theorem}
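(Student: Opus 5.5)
The plan is to specialize the slack decomposition of Theorem~\ref{thm:slack} to the case $\Delta=0$ and read off each conclusion from the vanishing of a summand. If $b+c=2q-n$, then $\Delta=0$, so equation~\eqref{eq:slack} gives $a_x+a_y+u+e=0$. All four summands are nonnegative integers, so each must be zero. The three set equalities then follow directly:
\begin{itemize}
\item $a_x=a_y=0$ gives $|Q_x|=|Q_y|=q$.
\item $u=0$ gives $|Q_x\cup Q_y|=n$, and since $Q_x\cup Q_y\subseteq\Pi$ with $|\Pi|=n$, we get $Q_x\cup Q_y=\Pi$.
\item $e=0$ gives $(B\uplus R)\setminus I=\varnothing$, i.e.\ $B\uplus R\subseteq I$. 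Combined with $I\subseteq B\uplus R$ from the proof of Theorem~\ref{thm:bound}, this yields $I=B\uplus R$.
\end{itemize}

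For the partition, define $H_0=Q_x\setminus I$ and $H_1=Q_y\setminus I$. These are disjoint, since an element of both would lie in $Q_x\cap Q_y=I$. Both are disjoint from $I=B\uplus R$ by construction. Their union with $I$ is $Q_x\cup Q_y=\Pi$, so $\Pi=(B\uplus R)\uplus H_0\uplus H_1$.

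Since $H_0$ and $H_1$ avoid $B\uplus R$, their members are neither Byzantine nor recovering, so they are correct and nonrecovering. For the sizes, $|H_0|=|Q_x|-|I|=q-(2q-n)=n-q$, using $|I|=b+c=2q-n$; the same computation gives $|H_1|=n-q$.

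No step is a real obstacle. The only point needing care is invoking the inclusion $I\subseteq B\uplus R_E$ established in the proof of Theorem~\ref{thm:bound}, rather than only the cardinality bound $|I|\le b+c$, since the equality $I=B\uplus R$ needs both inclusions.
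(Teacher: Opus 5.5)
Your proof is correct and follows the paper's argument: set $\Delta=0$ in~\eqref{eq:slack}, conclude $a_x=a_y=u=e=0$ from nonnegativity, read off the three displayed equalities, and compute the exclusive parts' size as $q-(2q-n)=n-q$. You also state explicitly that $I=B\uplus R$ needs the inclusion $I\subseteq B\uplus R_E$ from Theorem~\ref{thm:bound}'s proof together with $e=0$; the paper uses this implicitly.
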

\begin{proof}
Here $\Delta=0$. Equation~\eqref{eq:slack} is a sum of nonnegative integers,
so $a_x=a_y=u=e=0$. These equalities give the three displayed conclusions.
The exclusive parts are disjoint, lie outside $B\uplus R$, and have size
$q-|I|=q-(2q-n)=n-q$.
\end{proof}

\begin{corollary}[One-recovery normal form]
\label{cor:nf}
Let the protocol be incarnation-clean and single-step with $n=3f+1$ and
$q=2f+1$. If a prefix containing exactly one state-erasing recovery event by
a correct replica $r$ contains conflicting $C_x,C_y$, then:
(i)~$b=f$, (ii)~$|Q_x|=|Q_y|=2f+1$,
(iii)~$Q_x\cap Q_y=B\uplus\{r\}$, with $r$'s conflicting votes on opposite
sides of its recovery; and (iv)~there are disjoint correct nonrecovering
$f$-sets $H_0,H_1$ such that
$Q_x=B\cup H_0\cup\{r\}$, $Q_y=B\cup H_1\cup\{r\}$, and
$\Pi=B\uplus\{r\}\uplus H_0\uplus H_1$.
\end{corollary}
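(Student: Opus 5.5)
The plan is to reduce the corollary to the tight case of Theorem~\ref{thm:tight}. With exactly one state-erasing recovery event, by a correct replica $r$, the recovering-correct set is $R_E=\{r\}$, so $c=1$. (If the event occurred but $r$ were somehow not counted, we would have $c=0$, and Corollary~\ref{cor:zero} would already exclude the pair; the definition of $R_E$ rules this out directly.)

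Theorem~\ref{thm:bound} gives $b+c\ge 2q-n=2(2f+1)-(3f+1)=f+1$, so $b\ge f$. Combined with the model assumption $b\le f$, this yields $b=f$, which is (i). It also shows $b+c=f+1=2q-n$, so the slack $\Delta$ vanishes and Theorem~\ref{thm:tight} applies.

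Theorem~\ref{thm:tight} then gives the following.
\begin{itemize}
\item $|Q_x|=|Q_y|=q=2f+1$, which is (ii).
\item $Q_x\cap Q_y=B\uplus\{r\}$, the set identity in (iii).
\item $Q_x\cup Q_y=\Pi$.
\item $H_0:=Q_x\setminus I$ and $H_1:=Q_y\setminus I$ are disjoint correct nonrecovering sets, each of size $n-q=f$.
\end{itemize}
Since $Q_x=I\cup H_0$, $Q_y=I\cup H_1$, and $\Pi=(B\uplus\{r\})\uplus H_0\uplus H_1$, this gives (iv).

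For the remaining clause of (iii), note that $r\in I$, so $r$ signed valid votes for both $x$ and $y$. These are votes of a correct identity, so they were locally valid when emitted. Incarnation-cleanness (Definition~\ref{def:iclean}) forbids both votes from lying in a single incarnation. Because $r$ has exactly two incarnations, separated by its single recovery, one vote precedes the recovery and the other follows it.

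I expect no step to be a genuine obstacle; the argument is a direct specialization. The only points needing care are confirming $c=1$ exactly from the hypothesis, and stating the ordering claim in (iii) via incarnations rather than merely via membership in $R_E$.
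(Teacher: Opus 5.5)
Your proposal is correct and follows essentially the same route as the paper's proof. You set $c=1$, use Theorem~\ref{thm:bound} with $b\le f$ to force $b=f$ and zero slack, invoke Theorem~\ref{thm:tight} with $R=\{r\}$, and then use incarnation-cleanness together with the uniqueness of the recovery event to place $r$'s two votes on opposite sides of it; your explicit remark that $r$'s votes are locally valid, so Definition~\ref{def:iclean} genuinely applies, makes precise a step the paper leaves implicit.
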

\begin{proof}
Here $c=1$ and
\[
f+1=2q-n\le|Q_x\cap Q_y|\le b+1\le f+1,
\]
where the final inequality uses $b\le f$.
Thus $b=f$ and $b+c=2q-n$. Theorem~\ref{thm:tight} applies with
$R=\{r\}$. Incarnation-cleanness places $r$'s votes in different
incarnations. Because the prefix contains only that recovery event, the votes
straddle it.
\end{proof}

The normal form determines the partition's shape up to relabeling identities
and exchanging $x$ with $y$. Which identities occupy the two exclusive sets
can vary across executions. Theorem~\ref{thm:slack} bounds deviations when
the slack is positive.

\begin{corollary}[Accepted fork]
\label{cor:acc}
If a conflicting pair $C_x,C_y$ is separably accepted in its admissible prefix,
certificate safety is violated. If both acceptances cause decisions, two
correct validators decide conflicting values (Observation~\ref{obs:link}).
\end{corollary}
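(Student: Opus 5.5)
The argument unfolds both definitions and needs no counting. Suppose $C_x,C_y$ are conflicting, so $x\perp_d y$, and are separably accepted in an admissible prefix $E$. By Definition~\ref{def:sa} there are two distinct correct validators $u_x,u_y$, each undecided just before its acceptance, and within $E$ they accept $C_x$ and $C_y$ respectively. So two conflicting threshold certificates are accepted by correct validators. This is exactly the event that certificate safety forbids. The first claim is therefore immediate from the definitions, and Theorems~\ref{thm:bound}--\ref{thm:tight} play no role: they constrain which pairs can form, while acceptance is a separate premise.

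For the second claim, assume also that each acceptance causes the accepting validator to decide the certified value. Then $u_x$ decides $x$ and $u_y$ decides $y$. These validators are distinct and correct, and each was undecided immediately beforehand; all three facts come directly from separable acceptance. These are precisely the hypotheses of Observation~\ref{obs:link}, so I invoke it to conclude that consensus safety is violated. Since $x\perp_d y$ means $x$ and $y$ cannot both be decided for instance $d$, two correct validators have decided incompatible values.

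The only point that needs care is matching the hypotheses exactly. Observation~\ref{obs:link} requires validators that are distinct, correct, and undecided beforehand. Definition~\ref{def:sa} supplies all three properties explicitly, so no further check is needed. The clause that each validator accepts without receiving the other certificate is not used in the deduction. It matters only for attainability: together with the FirstVote-QC timeline of Figure~\ref{fig:timeline}, it shows the hypothesis can actually be met. I would add a one-line remark saying so, to make clear that the corollary is a conditional statement, not an existence claim.
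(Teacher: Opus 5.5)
Your proposal is correct and takes the same approach as the paper, which gives no separate proof because the corollary is immediate. Separable acceptance (Definition~\ref{def:sa}) supplies two distinct, still-undecided correct validators accepting conflicting certificates, which is exactly what certificate safety forbids, and adding the decision hypothesis gives precisely the premises of Observation~\ref{obs:link}; your remark that the counting theorems and the non-receipt clause are not used in the deduction is also accurate.
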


\begin{corollary}[Prospective configured-cap check]
\label{cor:caps}
Suppose every admissible prefix satisfies configured bounds
$b\le\bar b$ and $c\le\bar c$. If
$\bar b+\bar c<2q-n$, no conflicting certificate pair can occur.
\end{corollary}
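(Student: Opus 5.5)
The check follows by contraposition from the necessary condition in Theorem~\ref{thm:bound}. Assume, for contradiction, that some admissible prefix $E$ contains a conflicting pair $C_x,C_y$ whose votes all occur in $E$. We may always shrink $E$ to a finite admissible prefix that witnesses the pair; it suffices to take the shortest initial segment containing both certificates and every vote used to form them. Any initial segment of an admissible prefix is again admissible, and the configured bounds hold for every admissible prefix. Hence this witnessing prefix satisfies $b\le\bar b$ and $c=|R_E|\le\bar c$, where $b=|B|$ is the actual Byzantine count and $R_E$ is the set of correct identities that recover within the prefix.

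Next we apply Theorem~\ref{thm:bound} to this prefix. Its hypotheses are that the protocol is single-step and incarnation-clean at $n$ replicas with threshold $q$; these are standing assumptions for the check, and we state them explicitly in the corollary's context. The theorem then gives
\[
2q-n\le|Q_x\cap Q_y|\le b+c\le\bar b+\bar c<2q-n,
\]
which is a contradiction. Therefore no admissible prefix contains a conflicting certificate pair.

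The only step needing care is the bookkeeping that the configured caps constrain the same quantities $b$ and $c$ that appear in Theorem~\ref{thm:bound}. The cap $\bar c$ must bound the number of \emph{distinct} recovering correct identities in the witnessing prefix, not the number of recovery events. Because repeated recoveries of one identity count once in $c$, a cap on events also bounds $c$, so either reading of the cap suffices. Similarly, $\bar b$ must bound the actual Byzantine set $B$ and not only the fault budget $f$; this is exactly the content of the hypothesis $b\le\bar b$.

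Finally, the conclusion concerns formation, which is stronger than acceptance. Since no conflicting pair forms, none can be separably accepted, and Corollary~\ref{cor:acc} cannot be triggered.
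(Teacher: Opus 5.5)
Your proposal is correct and matches the paper's proof: both apply Theorem~\ref{thm:bound} to obtain $2q-n\le b+c\le\bar b+\bar c$, which contradicts $\bar b+\bar c<2q-n$. Your extra bookkeeping (shrinking to a witnessing prefix, and noting that a cap on recovery events also bounds distinct identities) is sound but not needed, since the caps already hold for the original prefix $E$.
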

\begin{proof}
Such a pair would require $2q-n\le b+c\le\bar b+\bar c$ by
Theorem~\ref{thm:bound}, a contradiction.
\end{proof}

\noindent\fbox{\parbox{0.94\columnwidth}{\textbf{Audit interpretation.}
The counts $b,c$ describe an execution that has already occurred.
Configured caps $\bar b,\bar c$ give the sufficient check in
Corollary~\ref{cor:caps} before an execution occurs. Counting every recovered
identity is conservative when recovery rules prevent conflicting reuse.}}

\section{Attainability: FirstVote-QC}
\label{sec:inst}

\begin{proposition}
\label{prop:inst}
For $f\ge1$ and $n=3f+1$, every $(d,x,y)$ with $x\perp_d y$ has an admissible
FirstVote-QC prefix containing a conflicting pair in the normal form of
Corollary~\ref{cor:nf}. The pair is separably accepted, and each acceptance
causes its validator to decide the certified value. Thus FirstVote-QC admits
a one-recovery accepted fork.
\end{proposition}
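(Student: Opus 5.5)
The proof is an explicit construction that generalizes the worked example ($n=7,f=2$) to arbitrary $f\ge1$. Partition $\Pi$ as $B\uplus\{r\}\uplus H_0\uplus H_1$ with $|B|=|H_0|=|H_1|=f$, and make the leader a Byzantine replica $\ell\in B$. Since $f\ge1$, $B$ is nonempty, and a Byzantine leader can sign both $\mathrm{PROPOSE}(d,x)$ and $\mathrm{PROPOSE}(d,y)$. Fix $u_x\in H_0$ and $u_y\in H_1$, which are nonempty because $f\ge1$. All correct replicas run FirstVote-QC. The replicas in $B$ deviate only by voting for both $x$ and $y$, which requires only their own keys.

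The schedule proceeds in the following order.
\begin{enumerate}
\item The leader sends $\mathrm{PROPOSE}(d,x)$ to $H_0\cup\{r\}$ only. Each of these replicas has $\mathit{voted}(d)=\bot$, so it sets $\mathit{voted}(d):=x$ and votes for $x$. Together with the $f$ Byzantine $x$-votes, this gives $|B\cup H_0\cup\{r\}|=2f+1=q$ matching votes, hence $C_x$.
\item Replica $r$ undergoes its single state-erasing crash, which resets $\mathit{voted}(d)$ and $\mathit{decided}(d)$ to $\bot$.
\item The leader sends $\mathrm{PROPOSE}(d,y)$ to $H_1\cup\{r\}$. These replicas, including the fresh incarnation of $r$, vote for $y$. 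With the Byzantine $y$-votes this forms $C_y$ with signer set $B\cup H_1\cup\{r\}$.
\item Deliver $C_x$ only to $u_x$ and $C_y$ only to $u_y$. All other messages, including each certificate's votes toward the other side, are delayed beyond the prefix. Both validators are undecided at delivery, and the signature and count checks pass. By rule~3, each accepts and decides its certified value.
\end{enumerate}

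Next, I verify admissibility. We have $b=f$. Signatures of correct replicas are never forged, because every correct vote is genuinely emitted. Every correct vote is locally valid: no replica in $H_0$ or $H_1$ receives a second proposal, and $r$'s two votes lie in different incarnations. The prefix contains exactly one recovery event, so $c=1$, and it is finite, which is allowed since no synchrony is assumed.

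The signer sets match Corollary~\ref{cor:nf} item by item:
\begin{itemize}
\item $|Q_x|=|Q_y|=2f+1$;
\item $Q_x\cap Q_y=B\uplus\{r\}$, with $r$'s votes on opposite sides of its recovery;
\item $H_0$ and $H_1$ are disjoint correct nonrecovering $f$-sets, and their union with $B\uplus\{r\}$ covers $\Pi$.
\end{itemize}
This could also be read off Corollary~\ref{cor:nf} directly, but the construction already exhibits it explicitly.

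Finally, $u_x\neq u_y$, both are correct and undecided beforehand, and neither receives the other certificate or its votes. Hence the pair is separably accepted in the sense of Definition~\ref{def:sa}. Corollary~\ref{cor:acc} together with Observation~\ref{obs:link} then yields the accepted fork.

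The one step that needs care is admissibility of the Byzantine behavior and the message scheduling. The leader's equivocating proposals must be allowed by the model, and the partial deliveries must not let an $H_1$ replica vote for $x$ first. Restricting each proposal's recipients and delaying all cross-traffic handles both concerns.
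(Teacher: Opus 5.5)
Your construction is correct and is essentially the paper's proof: the same partition $B\uplus\{r\}\uplus H_0\uplus H_1$, an equivocating Byzantine leader in $B$, $r$'s single recovery between its $x$- and $y$-votes, and separate delayed deliveries of $C_x$ to $u_x\in H_0$ and $C_y$ to $u_y\in H_1$. The only differences are cosmetic. You send $\mathrm{PROPOSE}(d,y)$ to $H_1$ after the recovery rather than at the start, and your explicit admissibility and separability checks spell out what the paper states more briefly.
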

\begin{proof}
Partition $\Pi$ into $B,\{r\},H_0,H_1$ with the sizes in
Corollary~\ref{cor:nf}, and let a Byzantine leader $L\in B$ send
$\mathrm{PROPOSE}(d,x)$ to $r$ and $H_0$, and
$\mathrm{PROPOSE}(d,y)$ to $H_1$. Then $r,H_0$, and $B$ vote for $x$, forming
$C_x$. Replica $r$ undergoes the only recovery, which resets
$\mathit{voted}(d):=\bot$. The leader now sends $y$ to $r$, then $r,H_1$, and
$B$ vote for $y$, forming $C_y$. Every correct vote is locally valid when
cast. Delay both certificates, then deliver $C_x$ to an undecided
$u_x\in H_0$ and $C_y$ to an undecided $u_y\in H_1$, without the opposite
side's certificate or votes. The pair is separably accepted using signature
and signer-count checks, and receipt
causes the two decisions. Since $f\ge1$, both exclusive sets contain a
validator for these deliveries.
\end{proof}

FirstVote-QC's volatile vote record makes the zero-slack form attainable.

\section{Why the Ingredients Are Separate}
\label{sec:indep}

The following countermodels separate three questions: whether recovery
permits a conflicting vote, whether enough votes form a certificate pair,
and whether validators accept that pair.

\begin{proposition}
\label{prop:bif}
State erasure does not imply that a recovered replica can conflict.
\end{proposition}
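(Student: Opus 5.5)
This is a separation claim, so the natural proof is a countermodel. I would exhibit an admissible protocol, or an admissible prefix of one, in which a correct replica undergoes a state-erasing crash, $(s_p,\mathsf{sk}_p,p)\to(s_0,\mathsf{sk}_p,p)$, yet can never emit a vote conflicting with one it cast before the crash. The cleanest choice is a variant of FirstVote-QC, call it \emph{RebuildFirst-QC}. It keeps the volatile records and adds one rule: a recovered replica does not vote for $d$ until it has rebuilt state. Concretely, it collects signed reports $\mathit{voted}(d)$ from $n-f$ replicas. It refuses to vote if any report, or any message it itself signed and now sees echoed, shows a prior vote under its own identity for a value $x$. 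In that case it re-adopts $x$ instead. A simpler alternative also satisfies the recovery semantics, which explicitly allow a recovered replica to ``remain retired'': the recovered replica permanently stops voting for $d$.

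I would use the retirement variant as the main witness, since it needs no quorum argument. The steps are as follows.

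First, check that the protocol is admissible and incarnation-clean. The vote rule of FirstVote-QC is unchanged within an incarnation, and retirement only removes votes.

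Second, construct a prefix with $c=1$. Replica $r$ votes for $x$, suffers a state-erasing crash that resets $\mathit{voted}(d)$ and $\mathit{decided}(d)$ to $\bot$, and then recovers. This shows that state erasure genuinely occurs, so $r\in R_E$.

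Third, argue that in every admissible continuation $r$ emits no further locally valid vote for $d$. Hence no vote $\mathrm{vote}_r(d,y)$ with $x\perp_d y$ ever exists. Since $r$ is correct, its key cannot be forged, so no valid conflicting vote under $r$'s identity exists either.

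As a further remark, this shows that $c$ counts identities that \emph{could} have conflicted, matching the audit note that counting every recovered identity is conservative. It also shows that the upper bound $|Q_x\cap Q_y|\le b+c$ of Theorem~\ref{thm:bound} can be strict.

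The main obstacle is making sure the countermodel does not trivialize the claim. A reader might object that retirement is simply ``not recovering''. To preempt this, I would also give the rebuild variant and check its key property. If $r$ voted for $x$ and that vote reached the leader or any replica that logged it, then among any $n-f$ reporters at least one correct replica holds evidence of $r$'s vote, provided votes are echoed to all replicas before counting. The point needing care is that echoing must precede certificate use. I would therefore state the rule as ``a vote counts only after being relayed to $n-f$ replicas'' and use the standard $(n-f)+(n-f)>n+f$ intersection to guarantee detection. If that argument becomes delicate, the retirement model alone suffices for the proposition as stated.
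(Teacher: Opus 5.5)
Your main witness is sound. The model explicitly lets a recovered replica ``remain retired''. Retirement only removes votes, so incarnation-cleanness survives, and $r\in R_E$ never signs a second value.

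The paper uses a different countermodel. Each correct replica's vote for $d$ is a fixed function of $d$ and immutable public data, so a replica restarted in $s_0$ resumes voting at once and simply recomputes its earlier vote. That route needs no extra recovery rule, no quorum intersection, and no cap on other recoveries. It is also immune to the objection you anticipate, because the recovered replica \emph{does} vote again. It therefore makes clear that conflict requires vote choices that depend on more than immutable data. Your retirement model is equally short, but it makes the separation nearly vacuous.

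The rebuild variant you offer to answer that objection does not work as stated. The rule that a vote ``counts only after being relayed to $n-f$ replicas'' constrains how votes are used, not when signatures are released. The proposition, however, concerns whether $r$ can emit conflicting votes at all. A Byzantine leader can receive $\mathrm{vote}_r(d,x)$ and withhold it. Then $r$ crashes, gathers $n-f$ reports with no evidence (correct reporters never saw the vote, Byzantine ones stay silent), and signs $\mathrm{vote}_r(d,y)$, so $r$ conflicts. The leader can even relay the old vote afterward, so the rule does not prevent its use either. The detection argument also assumes the loggers never undergo state-erasing recovery themselves, although their records are volatile.

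A repair would need $n-f$ acknowledgements of a stored intent \emph{before} the signature is released, a distributed analogue of Proposition~\ref{prop:local}, plus a cap on other recoveries. The paper's deterministic-vote model sidesteps all of this.

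Finally, your remark that the construction shows strictness of $|Q_x\cap Q_y|\le b+c$ does not follow. With $n=3f+1$, $q=2f+1$, and $r$ retired, any intersection would lie in $B$ and have size at most $f<2q-n$. So no conflicting pair forms, and Theorem~\ref{thm:bound} is vacuous there.
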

\begin{proof}
Let a correct replica's vote in $d$ be a fixed function of $d$ and immutable
public data. A crash erases mutable state, but recovery recomputes the same
vote.
\end{proof}

\begin{proposition}
\label{prop:ext}
A recovered replica that can conflict need not enable a conflicting
certificate pair.
\end{proposition}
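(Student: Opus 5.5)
This is an existence-of-countermodel claim, so I would prove it with a concrete admissible execution. In that execution a correct replica recovers statelessly and then casts a conflicting vote, yet no conflicting threshold certificate pair ever forms. The cleanest route uses Corollary~\ref{cor:zero} together with the bound of Theorem~\ref{thm:bound}. I fix $n=3f+1$, $q=2f+1$ and FirstVote-QC, which is incarnation-clean and single-step. I take $b=0$ and let exactly one correct replica $r$ recover, so $c=1$.

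Theorem~\ref{thm:bound} says any conflicting pair needs $b+c\ge 2q-n=f+1$. With $b=0$ and $c=1$ this requires $1\ge f+1$, which fails for every $f\ge1$. So no conflicting pair can occur in any prefix with these counts. It remains to show that $r$ genuinely \emph{can} conflict in such a prefix, so the statement is not vacuous.

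To do this I build the prefix explicitly. A correct leader proposes $x$ and $r$ votes $\mathrm{vote}_r(d,x)$. Then $r$ undergoes a state-erasing crash, which resets $\mathit{voted}(d)$. Its vote for $y$ must be locally valid, so $r$ must receive a leader-signed $\mathrm{PROPOSE}(d,y)$. A correct leader would not issue it, which is the subtle point and the step I expect to be the main obstacle. I would handle it in one of two ways:
\begin{itemize}
\item Let the correct leader itself be the recovering replica $r$. After erasure it may propose afresh, proposing $y$ to itself and then voting $y$.
\item Allow $b\ge1$ with $b+1<f+1$. This needs $f\ge2$ and a Byzantine leader $L$, giving $b=1$ and $c=1$ with $b+c=2\le f$.
\end{itemize}
I would choose the second option for clarity. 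I fix $f=2$, $n=7$, $q=5$, $B=\{L\}$, and let $r$ correct. Then $L$ sends $x$ to $r$ and to some correct replicas, $r$ recovers, and $L$ sends $y$ to $r$. Now $r$ has emitted locally valid conflicting votes in different incarnations, as incarnation-cleanness permits.

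Finally I conclude by Theorem~\ref{thm:bound}. Any conflicting $C_x,C_y$ witnessed in this prefix would need $|Q_x\cap Q_y|\ge 2q-n=3$, but $|Q_x\cap Q_y|\le b+c=2$. So no such pair forms, although the recovered replica conflicted. This separates ``can conflict'' from ``certificate pair forms.''
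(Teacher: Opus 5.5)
Your countermodel is correct. In FirstVote-QC with $f=2$, $B=\{L\}$ and a single recovery of $r$, the Byzantine leader lets $r$ cast locally valid votes for $x$ and $y$ in different incarnations. Theorem~\ref{thm:bound} then excludes any conflicting pair, because $2q-n=3>b+c=2$.

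The paper takes a different route: it changes the voting rules rather than the fault count. Every correct identity other than $r$ may vote only for $x$. So all possible $y$-signers lie in $B\cup\{r\}$, a set of size at most $f+1<2f+1=q$, and no $y$-certificate can exist at all.

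The paper's route has two advantages.
\begin{itemize}
\item It works for every $f\ge1$.
\item More importantly, it works even with the full budget $b=f$, where $b+c=f+1=2q-n$ meets the necessary condition of Theorem~\ref{thm:bound}. It therefore shows that this counting condition is not sufficient, and that formation depends on the other correct replicas' voting rules. This is the separation Section~\ref{sec:indep} aims at, and it matches the concluding remark on formation sufficiency.
\end{itemize}

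Your route reuses the main bound and stays inside a fully concrete protocol, but it has limits.
\begin{itemize}
\item It needs $f\ge2$. With $f=1$ you would need $b=0$ and a correct leader re-proposing after recovery, which FirstVote-QC leaves unspecified.
\item It gives a weaker separation. Non-formation comes from counting alone, because the adversary under-uses its budget. It would hold for any incarnation-clean single-step protocol, whatever the other correct replicas' rules. In the same protocol with $b=f$, the same recovery does yield a fork (Proposition~\ref{prop:inst}).
\end{itemize}
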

\begin{proof}
Let $r$ be able to vote $x$ before recovery and $y$ afterward, while every
other correct identity may vote only $x$. The possible $y$-signers satisfy
$S_y\subseteq B\cup\{r\}$ and $|S_y|\le f+1<2f+1=q$, so no
$y$-certificate exists.
\end{proof}

\begin{proposition}
\label{prop:sep}
Formation of a conflicting pair need not yield an accepted fork when
acceptance is not local.
\end{proposition}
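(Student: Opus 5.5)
This is an existence claim, so I will prove it with a countermodel. Keep the formation scenario of Proposition~\ref{prop:inst} fixed and change only the acceptance rule, replacing local acceptance by a non-local one. Concretely, take FirstVote-QC with $n=3f+1$, $q=2f+1$, and an execution in which the pair $C_x,C_y$ forms exactly as in the proof of Proposition~\ref{prop:inst}. That prefix is admissible, because voting, recovery, and certificate formation do not depend on how certificates are accepted. Then modify rule~3 so that acceptance consults an external reuse oracle, which Definition~\ref{def:sa} explicitly excludes from local acceptance.

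The oracle I would use is a durable, append-only \emph{signer-usage registry} for instance $d$, viewed as an idealized linearizable service outside the replicas' erasable state. A validator accepts a certificate $C_v$ only after it atomically checks and records in the registry that no signer of $C_v$ already appears on a certificate for a conflicting value $w\perp_d v$. Equivalently, the registry admits at most one value per instance, and the first registered certificate wins.

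The argument then runs in three steps. First, formation is unchanged: the same votes exist, so by Corollary~\ref{cor:nf} the pair is in normal form and its intersection $B\uplus\{r\}$ is nonempty. Second, take whichever certificate is registered first, say $C_x$. When a validator later tries to accept $C_y$, the registry shows that the identities in $Q_x\cap Q_y\neq\varnothing$ already back $x$, so the check fails and $C_y$ is rejected. Hence no correct validator accepts $C_y$, and by symmetry no two correct validators accept conflicting certificates. Third, certificate safety in the sense of the definition therefore holds even though the conflicting pair has formed. This also gives no accepted fork in the sense of Corollary~\ref{cor:acc}: that corollary's hypothesis of separable acceptance fails precisely because acceptance is not local.

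The main obstacle is making the oracle legitimate within the model without trivializing the statement. The registry must survive state-erasing crashes, and Byzantine replicas must not be able to corrupt it. I will therefore present it as an abstract external service whose correctness is assumed. That assumption is acceptable here because the proposition only claims that formation need not imply an accepted fork under non-local acceptance. As a fallback, the simplest version suffices: an oracle that permits acceptance of at most one value for $d$ ever, which makes the rejection step immediate.
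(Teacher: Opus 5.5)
Your countermodel is correct and is essentially the paper's: the paper keeps FirstVote-QC's two vote sets and adds a linearizable oracle $\mathsf{Bind}$ that binds each signer to the first value requested for $d$. Linearizability then orders the calls for the shared signer $r$, so the second certificate's acceptance fails, just as your registry rejects $C_y$ through the nonempty overlap $Q_x\cap Q_y$. The differences are cosmetic: your check is an atomic per-certificate transaction rather than per-signer calls, and your ``equivalently, at most one value per instance'' is a slightly stronger rule, not a literal equivalent, but either version suffices.
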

\begin{proof}
Keep FirstVote-QC's two vote sets, but equip validators with a linearizable
oracle $\mathsf{Bind}$. The first call $(p,d,v)$ binds signer $p$ to value
$v$ for $d$; subsequent calls succeed exactly when they request the same
value. Acceptance requires successful calls for every signer. Both
certificates may form and either may be accepted alone. But linearizability
orders their calls for the shared signer $r$: once one value is bound, the
other call fails. Thus both certificates cannot be accepted.
\end{proof}

Trusted counters and append-only attestation provide comparable non-reuse
guarantees.

\section{Signer-Side Persistence Suffices}
\label{sec:local}

Preserving safety-relevant state across recovery is a classical requirement.
PBFT's proactive-recovery treatment requires correct replicas to retain
their quorum state; its recovery monitor saves the log, service state, and
checkpoints before reboot~\cite[Sections 5.2.1 and 5.2.3]{castro2002}.
The following proposition gives a sufficient form for our single-step
model: durably record each vote before releasing its signature.

\begin{proposition}
\label{prop:local}
Assume each correct replica durably writes its vote before releasing the
signature, recovery cannot roll back this write, and the replica refuses
every later vote conflicting with that record. Then two conflicting
$(2f{+}1)$-certificates cannot both exist at $n=3f+1$, even if validators
check only signatures and the distinct-signer threshold.
\end{proposition}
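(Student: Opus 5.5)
The plan is to reduce to the zero-recovery intersection argument (Theorem~\ref{thm:bound} and Corollary~\ref{cor:zero}). The reduction shows that durable pre-release recording makes every correct identity behave, for the purpose of signing, as if it had a single incarnation spanning the whole execution. Suppose for contradiction that an admissible prefix $E$ contains conflicting certificates $C_x,C_y$ with $|Q_x|,|Q_y|\ge 2f+1$ and $n=3f+1$. Inclusion--exclusion gives $|Q_x\cap Q_y|\ge 2q-n=f+1$. Since $b\le f$, some identity $p\in Q_x\cap Q_y$ is correct. This step needs nothing about acceptance, because it concerns only the existence of the two valid vote sets. That matches the claim that validators may check only signatures and the signer count.

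The core step is to show that a correct $p$ cannot have released valid signatures on both $\mathrm{vote}_p(d,x)$ and $\mathrm{vote}_p(d,y)$. Signatures of correct replicas are unforgeable, so both votes were actually released by $p$, and we may order the releases in time, say $x$ first. By hypothesis $p$ durably wrote the record for $x$ before releasing that signature. Any state-erasing recoveries between the two releases cannot roll back this write. So when $p$ later considers voting for $y$, its durable state still contains the $x$ record. It therefore refuses $y$, since $x\perp_d y$, and a correct replica emits only locally valid votes. This contradicts $p\in Q_y$. The symmetric ordering is handled identically. One point to note explicitly is the case where $p$ crashes after writing but before releasing. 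There the record exists without a released vote, which is harmless: it only restricts $p$ further.

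The main obstacle is making precise that ``durably written'' survives every state-erasing crash. The model's crash map $(s_p,\mathsf{sk}_p,p)\to(s_0,\mathsf{sk}_p,p)$ erases all protocol state, so I will treat the durable record as a component outside $s_p$ that the hypothesis exempts from the reset. The cleanest framing is that, for the vote-conflict predicate, the modified protocol is \emph{incarnation-clean} with each identity's whole lifetime treated as one incarnation. Under this framing the effective recovering set relevant to conflicts is empty, so effectively $c=0$. Corollary~\ref{cor:zero} then gives the result directly: a conflicting pair would need $b\ge f+1$, contradicting $b\le f$. I would present the direct overlap argument as the proof and mention this reduction as the interpretation.
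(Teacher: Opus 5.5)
Your proposal is correct and is essentially the paper's proof: the two signer sets overlap in at least $2(2f+1)-(3f+1)=f+1>|B|$ identities, so some correct identity signed both, which its durable pre-release record and refusal rule forbid. Your explicit ordering of the two releases and your remark that the durable record must lie outside the erased state $s_p$ spell out what the paper leaves implicit, and you are right to keep the $c=0$ reduction as interpretation only, since $R_E$ itself need not be empty.
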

\begin{proof}
Two such signer sets intersect in at least
$2(2f+1)-(3f+1)=f+1>|B|$ identities, so a correct identity lies in both. It
would have released conflicting signatures, contrary to the refusal rule
for its durable vote record.
\end{proof}

Durability must precede release. A crash during a write exposes no signature
for that vote. A released signature has a completed record that recovery
cannot erase or roll back. Extending this construction to concurrent
instances of one identity (cloning), outside our model, would need shared
monotonic enforcement of non-equivocation. Validators need no visible proof
of signer persistence within this model.

These mechanisms act at different stages: durable records and safe recovery
prevent conflicting reuse; locks and phase rules constrain certificate
formation; a binding oracle such as $\mathsf{Bind}$
(Proposition~\ref{prop:sep}) constrains acceptance. Blame after
acceptance provides accountability but cannot prevent the accepted fork.

\section{Related Work}
\label{sec:related}

\paragraph{Quorum geometry and conflicting outcomes.}
Byzantine quorum systems impose intersection requirements against
adversary structures~\cite{malkhi1998}. The classical bound $2q-n$ is used
in multi-phase protocols such as PBFT~\cite{castro1999}. Bezerra and
Kuznetsov derive tight multiple-spending bounds in decentralized
cryptocurrencies~\cite{bk2023}, where incompatible outcomes depend on
overlapping support. For a realized conflicting pair, our accounting uses
$Q_x\cap Q_y\subseteq B\uplus R_E$ and decomposes its slack exactly
through~\eqref{eq:slack}.

PBFT uses prepare and commit phases~\cite{castro1999}; Tendermint combines
rounds with locking and valid-value rules~\cite{buchman2018}; HotStuff uses
view-indexed quorum certificates and chained commit rules~\cite{hotstuff2019}.
Their safety arguments additionally depend on these phase rules.

\paragraph{Crash recovery and stable storage.}
Aguilera et al. characterize crash-recovery consensus through failure
detection~\cite{aguilera2000}. Michael et al. recover some shared objects
without local stable storage~\cite{michael2017}, illustrating the dependence
on the object and system assumptions. Wang et al. study Byzantine consensus
with sleepy replicas, identifying storage requirements and resilience
bounds without stable storage under different parameters~\cite{wang2026}.
Those results concern feasibility and construction; ours concern
certificates that have already formed. TEE-Rex studies reliable
storage resilient to rollback under faults weaker than
Byzantine~\cite{teerex2025}; our model combines Byzantine signers with
otherwise-correct signers whose protocol state can be erased.

\paragraph{Mobile faults and cured state.}
Mobile Byzantine models let adversarial influence migrate among processes.
Garay introduces a cured state after an agent leaves a
process~\cite{garay1994}; Biely and Hutle allow every process to be Byzantine
for an interval~\cite{bielyhutle}; trusted-counter variants constrain mobile
faults~\cite{mobilebyz2025}. Our recovering replicas remain otherwise
correct. The count $c$ records correct recovery identities, including
those whose erased vote records permit conflicting signatures; the
Byzantine set $B$ remains fixed.

\paragraph{Trusted anti-equivocation.}
A2M provides attested append-only memory~\cite{a2m2007}, and TrInc uses
trusted monotonic counters to constrain equivocation~\cite{trinc2009}.
Pallas and Aegis address TEE rollback and cloning in blockchain consensus.
Their mechanisms include control over active trusted-component instances
across restart~\cite{pallas2026}. Such mechanisms can enforce the
non-equivocation or non-reuse properties used at the stages distinguished in
Sections~\ref{sec:indep} and~\ref{sec:local}.

\section{Limitations and Conclusion}
\label{sec:conclusion}

The slack identity accounts exactly for deviations from the tight signer
partition. With standard BFT parameters and one correct recovery event,
Corollary~\ref{cor:nf}'s normal form is forced; FirstVote-QC attains it and
admits an accepted fork. Formation sufficiency requires protocol-specific voting
rules beyond the necessary condition $b+c\ge2q-n$. Our assumptions are
fixed identities, attributable single-step certificates, and
incarnation-cleanness. Full formation characterization and multi-phase
recovery analysis remain future work.


\bibliographystyle{ACM-Reference-Format}
\bibliography{icdcn}

@article{malkhi1998,
  author    = {Dahlia Malkhi and Michael Reiter},
  title     = {Byzantine Quorum Systems},
  journal   = {Distributed Computing},
  volume    = {11},
  number    = {4},
  pages     = {203--213},
  year      = {1998},
  publisher = {Springer}
}

@article{aguilera2000,
  author    = {Marcos K. Aguilera and Wei Chen and Sam Toueg},
  title     = {Failure Detection and Consensus in the Crash-Recovery Model},
  journal   = {Distributed Computing},
  volume    = {13},
  number    = {2},
  pages     = {99--125},
  year      = {2000},
  publisher = {Springer}
}

@inproceedings{castro1999,
  author    = {Miguel Castro and Barbara Liskov},
  title     = {Practical {Byzantine} Fault Tolerance},
  booktitle = {Proceedings of the 3rd Symposium on Operating Systems Design
               and Implementation (OSDI)},
  pages     = {173--186},
  year      = {1999},
  publisher = {USENIX Association},
  address   = {Berkeley, CA, USA}
}

@article{castro2002,
  author    = {Miguel Castro and Barbara Liskov},
  title     = {Practical {Byzantine} Fault Tolerance and Proactive Recovery},
  journal   = {ACM Transactions on Computer Systems},
  volume    = {20},
  number    = {4},
  pages     = {398--461},
  year      = {2002},
  doi       = {10.1145/571637.571640},
  publisher = {ACM}
}

@inproceedings{hotstuff2019,
  author    = {Maofan Yin and Dahlia Malkhi and Michael K. Reiter and
               Guy Golan-Gueta and Ittai Abraham},
  title     = {{HotStuff}: {BFT} Consensus with Linearity and Responsiveness},
  booktitle = {Proceedings of the 2019 ACM Symposium on Principles of
               Distributed Computing (PODC)},
  pages     = {347--356},
  year      = {2019},
  publisher = {ACM},
  address   = {New York, NY, USA},
  doi       = {10.1145/3293611.3331591}
}

@inproceedings{michael2017,
  author    = {Ellis Michael and Dan R. K. Ports and Naveen Kr. Sharma and
               Adriana Szekeres},
  title     = {Recovering Shared Objects Without Stable Storage},
  booktitle = {31st International Symposium on Distributed Computing (DISC 2017)},
  series    = {LIPIcs},
  volume    = {91},
  pages     = {36:1--36:16},
  year      = {2017},
  doi       = {10.4230/LIPIcs.DISC.2017.36},
  publisher = {Schloss Dagstuhl--Leibniz-Zentrum f{\"u}r Informatik},
  address   = {Dagstuhl, Germany}
}

@inproceedings{teerex2025,
  author    = {Sadegh Keshavarzi and Gregory Chockler and Alexey Gotsman},
  title     = {{TEE} Is Not a Healer: Rollback-Resistant Reliable Storage},
  booktitle = {39th International Symposium on Distributed Computing (DISC)},
  series    = {LIPIcs},
  volume    = {356},
  pages     = {39:1--39:18},
  year      = {2025},
  doi       = {10.4230/LIPIcs.DISC.2025.39},
  publisher = {Schloss Dagstuhl--Leibniz-Zentrum f{\"u}r Informatik},
  address   = {Dagstuhl, Germany}
}

@inproceedings{wang2026,
  author    = {Chenxu Wang and Sisi Duan and Minghui Xu and Feng Li and
               Xiuzhen Cheng},
  title     = {Consensus in the Known Participation Model with {Byzantine}
               Failures and Sleepy Replicas},
  booktitle = {Network and Distributed System Security Symposium (NDSS)},
  year      = {2026},
  numpages  = {20},
  publisher = {Internet Society},
  address   = {Reston, VA, USA},
  doi       = {10.14722/ndss.2026.230448}
}

@inproceedings{trinc2009,
  author    = {Dave Levin and John R. Douceur and Jacob R. Lorch and
               Thomas Moscibroda},
  title     = {{TrInc}: Small Trusted Hardware for Large Distributed Systems},
  booktitle = {6th USENIX Symposium on Networked Systems Design and
               Implementation (NSDI)},
  pages     = {1--14},
  year      = {2009},
  publisher = {USENIX Association},
  address   = {Berkeley, CA, USA}
}

@inproceedings{a2m2007,
  author    = {Byung-Gon Chun and Petros Maniatis and Scott Shenker and
               John Kubiatowicz},
  title     = {Attested Append-Only Memory: Making Adversaries Stick to Their
               Word},
  booktitle = {Proceedings of the 21st ACM Symposium on Operating Systems
               Principles (SOSP)},
  pages     = {189--204},
  year      = {2007},
  publisher = {ACM},
  address   = {New York, NY, USA}
}

@inproceedings{pallas2026,
  author    = {J{\'e}r{\'e}mie Decouchant and David Kozhaya and Vincent Rahli
               and Jiangshan Yu},
  title     = {Pallas and Aegis: Rollback Resilience in {TEE}-Aided Blockchain
               Consensus},
  booktitle = {Network and Distributed System Security Symposium (NDSS)},
  year      = {2026},
  numpages  = {17},
  doi       = {10.14722/ndss.2026.242443},
  publisher = {Internet Society},
  address   = {Reston, VA, USA}
}

@inproceedings{bk2023,
  author    = {Jo{\~a}o Paulo Bezerra and Petr Kuznetsov},
  title     = {A Tight Bound on Multiple Spending in Decentralized
               Cryptocurrencies},
  booktitle = {27th International Conference on Principles of Distributed
               Systems (OPODIS 2023)},
  series    = {LIPIcs},
  volume    = {286},
  pages     = {31:1--31:19},
  year      = {2024},
  doi       = {10.4230/LIPIcs.OPODIS.2023.31},
  publisher = {Schloss Dagstuhl--Leibniz-Zentrum f{\"u}r Informatik},
  address   = {Dagstuhl, Germany}
}

@inproceedings{garay1994,
  author    = {Juan A. Garay},
  title     = {Reaching (and Maintaining) Agreement in the Presence of Mobile
               Faults (Extended Abstract)},
  booktitle = {Distributed Algorithms, 8th International Workshop (WDAG)},
  series    = {Lecture Notes in Computer Science},
  volume    = {857},
  pages     = {253--264},
  year      = {1994},
  doi       = {10.1007/BFb0020438},
  publisher = {Springer},
  address   = {Berlin, Heidelberg, Germany}
}

@inproceedings{mobilebyz2025,
  author    = {Bo Pan and Maria Potop-Butucaru},
  title     = {Mobile {Byzantine} Agreement in a Trusted World},
  booktitle = {29th International Conference on Principles of Distributed
               Systems (OPODIS 2025)},
  series    = {LIPIcs},
  volume    = {361},
  pages     = {7:1--7:20},
  year      = {2026},
  doi       = {10.4230/LIPIcs.OPODIS.2025.7},
  publisher = {Schloss Dagstuhl--Leibniz-Zentrum f{\"u}r Informatik},
  address   = {Dagstuhl, Germany}
}

@article{bielyhutle,
  author    = {Martin Biely and Martin Hutle},
  title     = {Consensus When All Processes May Be {Byzantine} for Some Time},
  journal   = {Theoretical Computer Science},
  volume    = {412},
  number    = {33},
  pages     = {4260--4272},
  year      = {2011},
  publisher = {Elsevier}
}

@misc{buchman2018,
  author    = {Ethan Buchman and Jae Kwon and Zarko Milosevic},
  title     = {The Latest Gossip on {BFT} Consensus},
  year      = {2019},
  month     = nov,
  eprint    = {1807.04938v3},
  archivePrefix = {arXiv},
  primaryClass  = {cs.DC},
  note      = {Version 3, 22 November 2019},
  url       = {https://arxiv.org/abs/1807.04938v3}
}

\end{document}